\newtheorem{theorem}{Theorem}
\newtheorem{lemma}[theorem]{Lemma}
\newtheorem{proposition}[theorem]{Proposition}
\newtheorem{corollary}[theorem]{Corollary}
\newtheorem{definition}[theorem]{Definition}
\newtheorem{remark}[theorem]{Remark}
\newtheorem{assumption}[theorem]{Assumption}
\newtheorem{problem}[theorem]{Problem}
\newtheorem{example}[theorem]{Example}
\newcommand{\calA}{\ensuremath{\mathcal{A}}}
\newcommand{\calB}{\ensuremath{\mathcal{B}}}
\newcommand{\calC}{\ensuremath{\mathcal{C}}}
\newcommand{\calH}{\ensuremath{\mathcal{H}}}
\newcommand{\calI}{\ensuremath{\mathcal{I}}}
\newcommand{\calM}{\ensuremath{\mathcal{M}}}
\newcommand{\calP}{\ensuremath{\mathcal{P}}}
\newcommand{\calW}{\ensuremath{\mathcal{W}}}
\newcommand{\EP}{\hspace*{\fill}~\mbox{\rule[0pt]{1.3ex}{1.3ex}}\par\endtrivlist\unskip}
\let\NAT@parse\undefined
\def\BibTeX{{\rm B\kern-.05em{\sc i\kern-.025em b}\kern-.08em
    T\kern-.1667em\lower.7ex\hbox{E}\kern-.125emX}}
\begin{document}
\title{Strong Structural Controllability of Structured Networks with MIMO node systems}
\author{
Yanting Ni, Xuyang Lou, Junjie Jiao, and Jiajia Jia$^{*}$
\thanks{Y. Ni, X. Lou and J. Jia are with the Jiangnan University,  Wuxi, 214122, P.R. China, (e-mail: niyant@163.com, Louxy@jiangnan.edu.cn, j.jia@jiangnan.edu.cn); Junjie Jiao is with the Chair of Information Oriented Control, TUM School of Computation, Information and Technology, Technical University of Munich, 81669 Munich, Germany; \texttt{ junjie.jiao@tum.de} .(Corresponding author: Jiajia Jia)}
}


\maketitle
\thispagestyle{empty}
\begin{abstract}
The article addresses the problem of strong structural controllability of structured networks with multi-input multi-output (MIMO) node systems.
The authors first present necessary and sufficient conditions for strong structural controllability, which involve both algebraic and graph-theoretic aspects.
These conditions are computationally expensive, especially for large-scale networks with high-dimensional state spaces.
To overcome this computational complexity, we propose a necessary algebraic condition from a node system's perspective and a graph-theoretic condition from a network topology's perspective.
The latter condition is derived from the structured interconnection laws and employs a new color change rule, namely {\em weakly color change rule} introduced in this paper.
Overall, this article contributes to the study of strong structural controllability in structured networks with MIMO node systems, providing both theoretical and practical insights for their analysis and design.
\end{abstract}

\begin{IEEEkeywords}
Strong structural controllability, Networks of autonomous systems, Multi-input-multi-output (MIMO) system
\end{IEEEkeywords}

\section{Introduction}
\label{sec:introduction}
\IEEEPARstart{I}{n} the omni-networking world today, more and more challenging theoretical problems are encountered in the analysis and synthesis of networks. In the past decades, the problem of controllability of networks has been investigated with much interest and some efficient criteria have been established \cite{Trumpf2018controllability,Xiang2020Controllability,ZHOU2015On,hao2019new,JLY2015}.
However, in many real-world networks, no matter the parameter values of systems or the strengths of interconnections between systems cannot be accurately obtained except for the zeroes that mean the absence of connections.
Typical examples include Internet, power grids, biological networks, transportation networks, and so on.
For this reason, based on Lin's weak structural controllability theory \cite{lin1974structural}, analytical tools were firstly provided to analyze weakly structurally controllable for a directed network in \cite{liu2011controllability}.

Following this, weakly structural controllability of networks has been widely studied from algebraic and graphical perspectives, and numerous efficient criteria have been established, see e.g., \cite{monshizadeh2014zero,commault2019structural,mu2022structural,Zhang2019Structural}.
We call a network \emph{weakly} structurally controllable if for almost all numerical realizations such that the associated network is controllable.
It should be pointed out that although uncontrollable networks with the same structure as a weakly structurally controllable network are atypical, in some cases the existence of such networks is not permitted (see \cite{Mayeda1979Strong}). Consequently, to relax the above-mentioned restriction, \emph{strong} structural controllability of networks has become a focal subject in recent years. A structured network is called (\emph{strongly} structurally) controllable if for all numerical realization such that the associated network is controllable.

Within the paradiam of strong structural controllability, the major part of related work has been done, where most, if not all, results on strong structural controllability are derived under the assumption that all node systems are single integrators, see e.g., \cite{Chapman2013On,menara2017number,trefois2015zero}. And in \cite{jia2020unifying}, the concept of structured networks was proposed, which is formed by interconnecting structured node systems and external control inputs via structured interconnection laws, and provided a unifying framework for strong structural controllability of structured networks.

Motivated by the fact that in most real dynamical networks, the node systems might have higher dimensions. Hence, \cite{jia2021scalable} further studied strong structural controllability of structured networks with single-input single-output (SISO) node systems, and established algebraic and graphical conditions for strong structural controllability of structured networks. When it comes to strong structural controllability of structured networks with multi-input multi-output (MIMO) node systems, to the best of our knowledge, this has not been studied before.

Motivated by the above-mentioned discussions, the strong structural controllability of structured networks with MIMO node systems is investigated in this article.
The main contributions of this article are the following.
\begin{enumerate}[1)]
  \item
  We study strong structural controllability of structured networks with general heterogeneous MIMO node systems, which generalizes the results of \cite{jia2020unifying} and \cite{jia2021scalable}.
  \item
  For large-scale structured networks, these criteria generally fail to work because of extremely high dimensions, which makes the computation very expensive. To relax the above-mentioned limitation, we provide conditions for strong structural controllability of large-scale structured networks from two perspectives, i.e. node systems and network topology.
  \item
  We introduce a new {\it weakly color change rule} to verify the graph-theoretic condition from the network topology.
\end{enumerate}

The outline of this article is as follows. In Section \ref{sec:preliminaries}, we formulate the problem considered in this article, and present some preliminary results.
In Section \ref{sec:3}, we provide algebraic and graph-theoretic conditions for strong structural controllability of structured networks with MIMO node systems. In Section \ref{sec:necessary} which states the main results in this article, we establish necessary conditions for strong structural controllability of large-scale structured networks from the perspective of node systems and the underlying network topology extracted from the structured interconnection laws. Furthermore, we introduce a new color change rule to verify the necessary condition. Finally, Section \ref{sec:condition} concludes this article.

The following notations and symbols are adopted.
Let $\mathbb{R}$ and $\mathbb{R}^n$ denote the fields of real numbers and the spaces of $n$-dimensional real vectors, respectively.
Likewise, the space of $n \times m$ real matrices is denoted by $\mathbb{R}^{n \times m}$.
For a given $n \times m$ matrix $A$, the entry in the $i$th row and $j$th column is denoted by $A_{ij}$.
For a given $n \times m$ block matrix $A$, the block on the $i$th row and $j$th column is denoted by $A^{(ij)}$.
For a given set of matrices $\{ A^{(1)}, \ldots, A^{(n)}\}$, $ \text{diag}(A^{(1)}, \cdots, A^{(n)})$ denotes the $n \times n$ blocked diagonal matrix, and $ \text{col}(A^{(1)}, \cdots, A^{(n)})$ denotes the vector stacked by $A^{(1)},\ldots,A^{(n)}$ in a column.
By a pattern matrix, which plays an important role throughout this article, we mean a matrix with entries in the set of symbols $\{ 0,\ast,? \}$.
The set of all $p \times q$ pattern matrices will be denoted by $\{ 0,\ast,? \} ^{p \times q}$.
For a given pattern matrix $\mathcal{M} \in \{ 0,\ast,? \} ^{p \times q}$, the pattern class of $\mathcal{M}$ is defined as the subset of $\mathbb{R}^{p \times q}$ given by
\begin{equation*}
\begin{aligned}
    \mathcal{P}(\mathcal{M}) = \{ M \in \mathbb{R}^{p \times q} \mid & M_{ij} = 0 \quad \text{if} \quad \mathcal{M}_{ij} = 0,\\
    & M_{ij} \neq 0 \quad \text{if} \quad \mathcal{M}_{ij} = \ast \}.
\end{aligned}
\end{equation*}
This means that for a matrix $M \in \mathcal{P}(\mathcal{M})$, the entry $M_{ij}$ is equal to the real number $0$ if $\mathcal{M}_{ij} = 0$, a nonzero real number if $\mathcal{M}_{ij} = \ast$, and an arbitrary real number if $\mathcal{M}_{ij} = ?$.
Specially, we denote  by $\textbf{0}$ the pattern matrix with all zero entries of appropriate dimensions, and by $\mathcal{I}$ the square pattern matrix of appropriate dimensions, in which the diagonal entries are equal to $\ast$ and others are $0$.

\section{PROBLEM FORMULATION AND PRELIMINARIES}\label{sec:2}
\label{sec:preliminaries}

\subsection{Problem Formulation}

Consider a structured network composed of $N$ structured systems.
The $k$th structured systems $(\mathcal{A}^{(k)}, \mathcal{B}^{(k)}, \mathcal{C}^{(k)})$, called the {\it structured node system} at node $k$, has the following dynamics
\begin{equation}\label{eq:nodesystems}
    \begin{cases}
       \dot{x}^{(k)} &= A^{(k)} x^{(k)} +  B^{(k)} v^{(k)},\\
    y^{(k)} &= C^{(k)} x^{(k)},
    \end{cases}
\end{equation}
where $A^{(k)} \in \calP(\mathcal{A}^{(k)})$, $B^{(k)} \in \calP(\mathcal{B}^{(k)})$, and $C^{(k)} \in \calP(\mathcal{C}^{(k)})$ have dimensions $n_k \times n_k$, $n_k \times r_k$, $p_k \times n_k$, respectively.
In addition, these $N$ node systems are interconnected via structured interconnection law given by block pattern matrices
\begin{equation*}
   \begin{scriptsize}\mathcal{W} = \begin{bmatrix}
                   \mathcal{W}^{(11)} & \cdots & \mathcal{W}^{(1N)} \\
                   \vdots & \ddots & \vdots \\
                   \mathcal{W}^{(N1)} & \cdots & \mathcal{W}^{(NN)}
                 \end{bmatrix},~
   \mathcal{H} = \begin{bmatrix}
                   \mathcal{H}^{(11)} & \cdots & \mathcal{H}^{(1m)} \\
                   \vdots & \ddots & \vdots \\
                   \mathcal{H}^{(N1)} & \cdots & \mathcal{H}^{(Nm)}
                 \end{bmatrix}\end{scriptsize}
\end{equation*}
where $\mathcal{W}$ and $\mathcal{H}$ have dimensions $r \times p$ and $r \times m$. Here $r := \sum^{N}_{k = 1} r_k$ and $p := \sum^{N}_{k = 1} p_k$.
More explicitly, the input signal $v^{(k)}$ injected to $(\mathcal{A}^{(k)}, \mathcal{B}^{(k)}, \mathcal{C}^{(k)})$ may contain both node system interactions and external control input,
\begin{equation*}
v^{(k)} = \sum\limits_{j=1}^{N} W^{(kj)} y^{(j)} + \sum\limits_{i=1}^{m} H^{(ki)} u_i,
\end{equation*}
where $W^{(kj)} \in \mathcal{P}(\mathcal{W}^{(kj)})$ describes the interconnection from node $j$ to node $k$, and $H^{(ki)} \in \mathcal{P}(\mathcal{H}^{(ki)})$ describes the interconnection from external input $u_i$ to node $k$ with $i=1,\ldots,m$.
 By introducing the block diagonal matrices
\begin{equation*} \label{eq:ABC}
    \begin{aligned}
    A &= \text{diag}(A^{(1)},\ldots,A^{(N)}),\\
    B &= \text{diag}(B^{(1)},\ldots,B^{(N)}),\\
    C &= \text{diag}(C^{(1)},\ldots,C^{(N)}),
    \end{aligned}
\end{equation*}
and the block matrices
\begin{scriptsize}
\begin{equation*} \label{eq:WH}
 W = \begin{bmatrix}
       W^{(11)} & \ldots & W^{(1N)} \\
       \vdots & \ddots & \vdots \\
       W^{(N1)} & \ldots & W^{(NN)}
       \end{bmatrix},
  H = \begin{bmatrix}
       H^{(11)} & \ldots & H^{(1m)} \\
        \vdots & \ddots & \vdots \\
        H^{(N1)} & \ldots & H^{(Nm)}
        \end{bmatrix},
\end{equation*}
\end{scriptsize}
the system \eqref{eq:nodesystems} is rewritten in a compact form as
\begin{equation} \label{eq:comnetwork}
    \dot{x} = (A + BWC)x + BHu,
\end{equation}
where $x = \text{col}(x_1,\ldots,x_N)$ and $u = \text{col}(u_1,\ldots,u_m)$.
Here, $x \in \mathbb{R}^{n}$ with $n := \sum_{k=1}^{N} n_k$.

Now introduce the block pattern matrices
\begin{equation} \label{eq:patternABC}
    \begin{aligned}
    \mathcal{A} &= \text{diag}(\mathcal{A}^{(1)},\ldots,\mathcal{A}^{(N)}),\\
    \mathcal{B} &= \text{diag}(\mathcal{B}^{(1)},\ldots,\mathcal{B}^{(N)}),\\
    \mathcal{C} &= \text{diag}(\mathcal{C}^{(1)},\ldots,\mathcal{C}^{(N)}).
    \end{aligned}
\end{equation}
It is easy to see that structured networks are a collection of systems \eqref{eq:comnetwork}, where $A \in \mathcal{P}(\mathcal{A}),B \in \mathcal{P}(\mathcal{B}),C \in \mathcal{P}(\mathcal{C}),W \in \mathcal{P}(\mathcal{W}),H \in \mathcal{P}(\mathcal{H})$.
For the sake of simplicity, the structured network will be denoted by $(\mathcal{A},\mathcal{B},\mathcal{C},\mathcal{W},\mathcal{H})$.
The structured network $(\mathcal{A},\mathcal{B},\mathcal{C},\mathcal{W},\mathcal{H})$ is called {\em strongly structurally controllable} (shortly controllable) if \eqref{eq:comnetwork} is  controllable for all $A \in \mathcal{P}(\mathcal{A}),B \in \mathcal{P}(\mathcal{B}),C \in \mathcal{P}(\mathcal{C}),W \in \mathcal{P}(\mathcal{W}),H \in \mathcal{P}(\mathcal{H})$.

Strong structural controllability of such a structured network was studied assuming that node systems are single integrators or SISO in \cite{jia2020unifying} and \cite{jia2021scalable}, respectively.
However, in representing node systems of networks, capturing the system simply by single integrators or SISO is not always possible; thus, a more general MIMO node system is required.
Since the introduction of MIMO node dynamics makes the whole network more complicated, the results in the existing literature need to be revised.
Motivated by the limitations above, the research problem considered in this article is formulated as follows.
\begin{problem}\label{promblem1}
Find conditions under which the structured network with MIMO node systems is controllable.
\end{problem}

\subsection{Properties and operations of pattern matrices}

To begin with,  we review definitions of sums and products of pattern matrices introduced in \cite{SWCT2021}.
The definition of addition and multiplication of the symbols $0,\ast$ and $?$ is presented in the following Table \ref{ta:results}.

\begin{table}[h!]
	\centering
	\caption{Addition and multiplication within the set $\{0,\ast,?\}$.}
	\begin{tabular}{c|ccc}
		$+$ & $0$ & $\ast$ & $?$  \\  \hline
		$0\rule{0pt}{2.2ex}$ & $0$ & $\ast$ & ${?}$ \\
		$\ast$& $\ast$ & $?$ & $?$  \\
		$?$& $?$ & $?$ & $?$
	\end{tabular}
	\qquad
	\begin{tabular}{c|ccc}
		$\boldsymbol{\cdot}$  & $0$ & $\ast$ & $?$ \\  \hline
		$0\rule{0pt}{2.2ex}$ &  $0$ & $0$ & $0$ \\
		$\ast$& $0$ & $\ast$ & $?$ \\
		$?$&  $0$ & $?$ & $?$
	\end{tabular}
	\label{ta:results}
\end{table}
Then, the definition of addition of pattern matrices, based on the operations defined in this table, is given as follows.
\begin{definition}
\label{d:mPa}
Consider pattern matrices  $\mathcal{M}, \mathcal{N} \in \{0,\ast,?\}^{p \times q}$.
Then the sum $\mathcal{M} + \mathcal{N} \in \{0,\ast,?\}^{p \times q}$ is defined as
\begin{equation*}
(\mathcal{M} + \mathcal{N})_{ij}:=  \mathcal{M}_{ij} + \mathcal{N}_{ij}.
\end{equation*}
\end{definition}
Furthermore, $\mathcal{P}(\calM) + \mathcal{P}(\mathcal{N})$ is defined as the usual Minkowski sum of sets, which means that
$$
\mathcal{P}(\mathcal{M}) + \mathcal{P}(\mathcal{N}) := \{M+N \mid M \in \mathcal{P}(\mathcal{M}) \mbox{ and }  N \in \mathcal{P}(\mathcal{N})\}.
$$
We now have the following proposition.
\begin{proposition}
\label{p:1} \cite[Proposition 1]{SWCT2021}
For pattern matrices $\calM$ and $\mathcal{N}$ of the same dimensions,
$\calP(\calM) + \mathcal{P}(\mathcal{N}) = \calP(\calM + \mathcal{N})$.
\end{proposition}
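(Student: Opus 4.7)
The plan is to prove set equality by showing both inclusions, working entry by entry and leveraging Definition \ref{d:mPa} together with the addition table in Table \ref{ta:results}.

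First I would handle the inclusion $\mathcal{P}(\mathcal{M}) + \mathcal{P}(\mathcal{N}) \subseteq \mathcal{P}(\mathcal{M} + \mathcal{N})$. Pick $M \in \mathcal{P}(\mathcal{M})$ and $N \in \mathcal{P}(\mathcal{N})$ and check that $M+N$ respects the sign pattern of $\mathcal{M}+\mathcal{N}$ at each $(i,j)$. There are only six unordered cases for $(\mathcal{M}_{ij}, \mathcal{N}_{ij})$, and in each case the required property for $(M+N)_{ij}$ is immediate: for $(0,0)$ we get $0+0 = 0$ matching $(\mathcal{M}+\mathcal{N})_{ij} = 0$; for $(0,\ast)$ and $(\ast,0)$ we get $0$ plus a nonzero number, matching $\ast$; for every remaining case $(\mathcal{M}+\mathcal{N})_{ij} = ?$, which places no constraint on $(M+N)_{ij}$.

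Next I would prove the reverse inclusion, which is the part that actually requires construction. Given $L \in \mathcal{P}(\mathcal{M}+\mathcal{N})$, I need to exhibit a splitting $L = M+N$ with $M \in \mathcal{P}(\mathcal{M})$ and $N \in \mathcal{P}(\mathcal{N})$; this can be done entrywise. For $(\mathcal{M}+\mathcal{N})_{ij} = 0$ the only possibility in the table is $\mathcal{M}_{ij} = \mathcal{N}_{ij} = 0$, forcing $L_{ij} = 0$ and the trivial split $M_{ij} = N_{ij} = 0$. For $(\mathcal{M}+\mathcal{N})_{ij} = \ast$ exactly one of $\mathcal{M}_{ij}, \mathcal{N}_{ij}$ is $\ast$ and the other is $0$; assign the full value $L_{ij}$ (nonzero) to the $\ast$-slot and $0$ to the $0$-slot. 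For $(\mathcal{M}+\mathcal{N})_{ij} = ?$ the symbols $(\mathcal{M}_{ij}, \mathcal{N}_{ij})$ can be $(\ast,\ast)$, $(0,?)$, $(?,0)$, $(\ast,?)$, $(?,\ast)$, or $(?,?)$; in each subcase I can split $L_{ij}$ freely, provided I take a small precaution in the $\ast$-subcases to guarantee the chosen entries are nonzero (e.g., when $\ast$ must be matched and $L_{ij} = 0$, pick any nonzero $a$ and compensate on the $?$-side with $-a$).

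The main obstacle is just the bookkeeping in this second inclusion, and specifically making sure the construction handles the $(\ast,\ast)$ case with $L_{ij}=0$ and the mixed $(\ast,?)$ cases without violating the nonzero constraint on $\ast$-positions. Once that is checked, the combined $M$ and $N$ built entrywise clearly satisfy $M \in \mathcal{P}(\mathcal{M})$, $N \in \mathcal{P}(\mathcal{N})$, and $M+N = L$, completing the proof.
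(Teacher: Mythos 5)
Your argument is correct. Note that the paper itself gives no proof of this statement---it is imported verbatim as \cite[Proposition 1]{SWCT2021}---so there is no in-paper argument to compare against; your entrywise case analysis is the natural (and essentially the only) way to prove it. Both directions are handled properly: the forward inclusion is indeed immediate from the addition table, and the reverse inclusion is where the content lies, since one must \emph{construct} a splitting $L = M + N$. You correctly identify the only delicate subcases, namely those where a $\ast$-position must receive a nonzero value even though $L_{ij}$ itself may be zero (the $(\ast,\ast)$ case with $L_{ij}=0$, resolved by $a$ and $-a$) or where a $\ast$ is paired with a $?$ that can absorb the compensation $L_{ij}-a$. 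One small point worth making explicit in a final write-up: in the $(\ast,\ast)$ subcase with $L_{ij}\neq 0$ you should also record a concrete choice (e.g.\ $L_{ij}/2$ and $L_{ij}/2$, both nonzero) so that every branch of the construction is exhibited rather than asserted. With that, the proof is complete.
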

Next, we review the definition of multiplication of pattern matrices.
\begin{definition}
\label{d:mPm}
Consider pattern matrices $\calM \in \{0,\ast,?\}^{p \times q}$ and $\mathcal{N}  \in \{0,\ast,?\}^{q \times s}$.
Then the product $\calM \mathcal{N}  \in \{0,\ast,?\}^{p \times s}$ is defined by
\begin{equation*}
(\calM \mathcal{N})_{ij} :=  \sum_{\ell = 1}^{q} \calM_{i\ell} \boldsymbol{\cdot} \mathcal{N}_{\ell j}.
\end{equation*}
\end{definition}
We define
$
\calP(\calM)\calP(\mathcal{N}) := \{MN \mid M \in \calP(\calM) \mbox{ and }  N \in \calP(\mathcal{N})\}.
$
Unfortunately, for general pattern matrices $\mathcal{M}$ and $\mathcal{N}$, the equality $\calP(\calM)\calP(\mathcal{N}) = \calP(\calM \mathcal{N})$ \emph{does not hold} \cite[Example 1]{SWCT2021}. Nonetheless, if $\mathcal{M}$ and $\mathcal{N}$ have a special structure as we present next, such an equality can be derived.
\begin{lemma}\cite[Lemma 5]{jia2021scalable}
\label{l:PMP}
Consider two pattern matrices $\calM \in \{0,\ast,?\}^{p \times q}$ and $\mathcal{N} \in \{0,\ast,?\}^{q \times r}$.
Then, the equality $$ \calP(\calM)\calP(\mathcal{N}) = \calP(\calM \mathcal{N})$$
holds if at least one of the following two conditions.
\begin{enumerate}
\item Each row of $\mathcal{N}$ has exactly one entry equal to $\ast$ and the remaining entries are zero.
\item Each column of $\calM$ has exactly one entry equal to $\ast$ and the remaining entries are zero.
\end{enumerate}
\end{lemma}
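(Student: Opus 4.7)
The plan is to prove the two inclusions separately, and then handle condition (2) by transposing condition (1). For the easy direction $\calP(\calM)\calP(\mathcal{N}) \subseteq \calP(\calM\mathcal{N})$, which holds without any structural assumption, I would simply take arbitrary $M\in\calP(\calM)$ and $N\in\calP(\mathcal{N})$ and check entrywise that $(MN)_{ij}=\sum_\ell M_{i\ell}N_{\ell j}$ respects the pattern produced by the tabulated $\{0,\ast,?\}$ arithmetic in Definition \ref{d:mPm}: any $0\cdot\text{anything}=0$ term vanishes, any $\ast\cdot\ast$ term is a guaranteed nonzero product, and $?$ entries simply inherit arbitrariness. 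No structural hypothesis is used here.

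The main work is the reverse inclusion $\calP(\calM\mathcal{N})\subseteq\calP(\calM)\calP(\mathcal{N})$ under condition (1). Given $P\in\calP(\calM\mathcal{N})$, I would \emph{choose a canonical $N$} to eliminate the column degree of freedom in $\mathcal{N}$: for each row $\ell$, let $j_\ell$ be the unique column with $\mathcal{N}_{\ell j_\ell}=\ast$, and set $N_{\ell j_\ell}=1$ with all other entries zero. Then $N\in\calP(\mathcal{N})$, the sets $L_j:=\{\ell:j_\ell=j\}$ partition $\{1,\ldots,q\}$, and $(MN)_{:,j}=\sum_{\ell\in L_j}M_{:,\ell}$. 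The task reduces to choosing $M\in\calP(\calM)$ so that these column sums reproduce $P$.

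Next I would construct $M$ entrywise, fixing a pair $(i,j)$ and analysing the possible values of $\{\calM_{i\ell}:\ell\in L_j\}$. Using the addition table, $(\calM\mathcal{N})_{ij}$ is $0$ iff all $\calM_{i\ell}$ are $0$ (forcing $P_{ij}=0$, which is compatible with $M_{i\ell}=0$); it is $\ast$ iff exactly one $\calM_{i\ell}$ is $\ast$ and the rest are $0$ (forcing $P_{ij}\ne 0$, realised by setting that $M_{i\ell}=P_{ij}$); and it is $?$ iff either some $\calM_{i\ell}=?$ or at least two are $\ast$, in which case I have a free real parameter in the row-$i$ entries of the columns of $M$ indexed by $L_j$, so any prescribed value of $P_{ij}$ can be attained while respecting the $\calM$-pattern in that row. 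Doing this for every $(i,j)$ produces the desired factorisation $P=MN$.

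Finally, condition (2) follows from condition (1) by transposition: if every column of $\calM$ has exactly one $\ast$ (rest zero), then $\calM^{\!\top}$ satisfies condition (1) for the product $\mathcal{N}^{\!\top}\calM^{\!\top}$, and one checks directly from the definitions that pattern multiplication commutes with transposition and that $\calP(\mathcal{X}^{\!\top})=\{X^{\!\top}:X\in\calP(\mathcal{X})\}$. The anticipated obstacle is the bookkeeping in the entrywise construction: one must verify that the freedom to choose the $?$-valued and $\ast$-valued entries of $M$ is always \emph{simultaneously} sufficient across all $(i,j)$, which is ensured precisely because the sets $L_j$ are \emph{disjoint}, so the choices made in column $\ell\in L_j$ of $M$ do not conflict with choices made for any other column of $P$.
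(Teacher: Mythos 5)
Your proof is correct. Note that the paper itself does not prove this lemma --- it is imported verbatim as \cite[Lemma 5]{jia2021scalable} --- so there is no in-paper argument to compare against; your write-up is a valid self-contained replacement. The two pillars of your argument are sound: the inclusion $\calP(\calM)\calP(\mathcal{N})\subseteq\calP(\calM\mathcal{N})$ indeed holds unconditionally (a pattern sum equals $0$ only if every summand is $0$, and equals $\ast$ only if exactly one summand is $\ast\boldsymbol{\cdot}\ast$ and the rest vanish), and the reverse inclusion under condition (1) works exactly because the index sets $L_j$ partition $\{1,\ldots,q\}$, so the entries $M_{i\ell}$, $\ell\in L_j$, can be chosen independently for each pair $(i,j)$ to realise $P_{ij}$ as the column sum $\sum_{\ell\in L_j}M_{i\ell}$ while respecting the $\{0,\ast,?\}$ constraints (in the $?$ case one only needs that any real number is a sum of prescribed-support reals with the nonzero entries nonzero, which you can always arrange). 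The reduction of condition (2) to condition (1) by transposition is also fine, since symbol multiplication in Table \ref{ta:results} is commutative, hence $(\calM\mathcal{N})^{\top}=\mathcal{N}^{\top}\calM^{\top}$ and $\calP(\mathcal{X}^{\top})=\{X^{\top}:X\in\calP(\mathcal{X})\}$.
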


A pattern matrix $\mathcal{M} \in \{0,\ast,?\}^{p \times q}$ with $p \le q$ is said to have {\em full row rank} if $M$ has full row rank for every $M \in \mathcal{P}(\mathcal{M})$.
In order to introduce graph theoretic conditions for full rank properties of pattern matrices, we define the digraph associated with $\mathcal{M} \in \{0,\ast,?\}^{p \times q}$  with $p \le q$ as $G(\mathcal{M}) = (V,E)$ as follows.
 Take a vertex set $V = \{1,2, \ldots, q\}$, and define an edge set $E \subseteq V \times V$ such that $(j,i) \in E$ if and only if $\mathcal{M}_{ij} =\ast$ or $\mathcal{M}_{ij} =?$.
 We call $j$ an out-neighbor of $i$ if $(i,j) \in E$.
In addition, to distinguish between $\ast$ and $?$ entries in $\mathcal{M}$, we define two subsets $E_\ast$ and $E_?$ of the edge set $E$ as follows: $(j,i) \in E_\ast$ if and only if $\mathcal{M}_{ij} = \ast$ and $(j,i) \in E_?$ if and only if $\mathcal{M}_{ij} = ?$.

Consider the following color change rule introduced in \cite{jia2020unifying}.
\begin{enumerate}
\item Initially, color all vertices in $V$ white.
\item If a vertex $i$ has exactly one white out-neighbor $j$ and $(i,j) \in E_{\ast}$, then change the color of $j$ to black.
\item repeat step 2 until no more color changes are possible.
\end{enumerate}
The derived set $\mathcal{D}(\calM)$ of $G(\calM)$ is defined as the set of all black nodes obtained by applying the above procedure to $G(\calM)$.
In the special case that $\mathcal{D}(\calM) = \{1,2,\ldots,p\}$, we call the graph $G(\calM)$ {\it colorable}.
It has been shown in \cite[Theorem 10]{jia2020unifying} that $\calM$ has full row rank if and only if $G(\calM)$ is colorable, i.e., $\mathcal{D}(\calM) = \{1,2,\ldots,p\}$.

\subsection{Conditions for controllability of $(\mathcal{A},\mathcal{B})$}
For given pattern matrices $\mathcal{A} \in \{ 0,\ast,? \}^{n \times n}$, $\mathcal{B} \in \{ 0,\ast,? \}^{n \times r}$ and $\mathcal{C} \in \{ 0,\ast,? \}^{p \times n}$, the structured
system associated with these pattern matrices is defined as the family of LTI systems
\begin{align}
    \dot{x} &=Ax+Bu,\label{LTI1}\\
        y &=Cx,
\end{align}
where $A \in \mathcal{P}(\mathcal{A}),B \in \mathcal{P}(\mathcal{B})$ and $C \in \mathcal{P}(\mathcal{C})$.
For the sake of simplicity, we denote the above structured system by $(\mathcal{A},\mathcal{B},\mathcal{C})$.
Similarly, we will denote the family of systems \eqref{LTI1} by $(\mathcal{A},\mathcal{B})$.
If \eqref{LTI1} is controllable for all $A \in \mathcal{P}(\mathcal{A})$ and $B \in \mathcal{P}(\mathcal{B})$, the structured system $(\mathcal{A},\mathcal{B})$ is called {\it controllable}.

Recently,  \cite{jia2020unifying} linked controllability of $(\mathcal{A},\mathcal{B})$ to full row rank properties of two pattern matrices  $\begin{bmatrix}
\calA & \calB
\end{bmatrix}$ and $\begin{bmatrix}
\calA + \calI & \calB
\end{bmatrix}$.
More explicitly, both algebraic and graph theoretic necessary and sufficient conditions under which a given structured network is controllable have been stated as follows.
\begin{proposition}\cite[Theorem 6 \& 11]{jia2020unifying} \label{p:T4SS}
The following statements are equivalent:
\begin{enumerate}
\item a structured system $(\calA,\calB,\calC)$ given by \eqref{LTI1} is controllable;
\item both the pattern matrices  $\begin{bmatrix}
\calA & \calB
\end{bmatrix}$ and $\begin{bmatrix}
\calA+\calI & \calB
\end{bmatrix}$ have full row rank.
\item both the graphs $G(\begin{bmatrix}
\calA & \calB
\end{bmatrix})$ and $G(\begin{bmatrix}
\calA+\calI & \calB
\end{bmatrix})$ are colorable.
\end{enumerate}
\end{proposition}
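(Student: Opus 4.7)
My strategy is to establish $(1) \Leftrightarrow (2)$ by the Popov--Belevitch--Hautus (PBH) eigenvector test applied uniformly over $\calP(\calA)\times\calP(\calB)$, and then to read off $(2) \Leftrightarrow (3)$ from the full-row-rank / colorability equivalence of \cite{jia2020unifying} already quoted in the excerpt.

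For $(1)\Leftrightarrow (2)$, recall the PBH test: $(A,B)$ is controllable iff $\begin{bmatrix} \lambda I - A & B \end{bmatrix}$ has full row rank for every $\lambda \in \mathbb{C}$. Thus $(\calA,\calB,\calC)$ is (strongly structurally) controllable iff this rank condition holds for every $A\in \calP(\calA)$, $B \in \calP(\calB)$, and every $\lambda \in \mathbb{C}$. I would split the family of conditions on $\lambda=0$ versus $\lambda \neq 0$. The case $\lambda = 0$ is immediately the full-pattern-row-rank statement for $\begin{bmatrix} \calA & \calB \end{bmatrix}$, since $\calP(-\calA)=\calP(\calA)$. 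For $\lambda \neq 0$, the key structural observation is that whenever $A \in \calP(\calA)$, the shifted matrix $\lambda I - A$ lies in $\calP(\calA + \calI)$: the off-diagonal entries inherit the pattern of $\calA$, while the diagonal entry $\lambda - A_{ii}$ is forced to be nonzero exactly when $\calA_{ii}=0$ (matching the $\ast$ that $\calI$ adds on the diagonal) and is otherwise unconstrained (matching the resulting $?$). This directly yields the sufficiency half $(2)\Rightarrow (1)$.

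The main obstacle is the converse direction in the $\lambda \neq 0$ case: from a rank-deficient witness $(\tilde A, B) \in \calP(\calA+\calI)\times\calP(\calB)$ I must produce $(A,B,\lambda)$ with $A \in \calP(\calA)$, $\lambda \neq 0$, and $\begin{bmatrix} \lambda I - A & B \end{bmatrix}$ rank-deficient. The naive inversion $A := \lambda I - \tilde A$ requires $\tilde A_{ii} = \lambda$ for every $i$ with $\calA_{ii}=0$, which is not automatic for an arbitrary witness. I would resolve this by refining the witness: the rank-deficient locus inside $\calP(\calA+\calI)\times\calP(\calB)$ is a nonempty real-algebraic set, and I would show that its intersection with the affine condition ``all diagonal entries indexed by $\{i:\calA_{ii}=0\}$ equal a common nonzero value $c$ distinct from $\tilde A_{jj}$ for $j$ with $\calA_{jj}=\ast$'' remains nonempty. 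This is done by a generic-position argument: one varies $c$ together with the free diagonal entries (positions where $\calA_{ii}\in\{\ast,?\}$ give $(\calA+\calI)_{ii}=?$, so $\tilde A_{ii}$ is unconstrained there) and exploits the lower-semicontinuity of the rank function, together with the fact that the only values of $c$ to be avoided form a finite set. Setting $\lambda := c$ then completes the reduction to PBH failure of a genuine realization of $\calP(\calA)$.

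Finally, $(2)\Leftrightarrow (3)$ follows by applying the cited full-row-rank iff colorability criterion (Theorem 10 of \cite{jia2020unifying}) separately to each of the two pattern matrices $\begin{bmatrix} \calA & \calB \end{bmatrix}$ and $\begin{bmatrix} \calA+\calI & \calB \end{bmatrix}$; no further topological or algebraic input is needed for this step.
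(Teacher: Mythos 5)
First, note that the paper does not actually prove Proposition \ref{p:T4SS}: it is imported verbatim from \cite[Theorems 6 and 11]{jia2020unifying}, so there is no in-paper argument to compare against, and you are supplying a proof from scratch. Your architecture --- the PBH test, the split into $\lambda=0$ versus $\lambda\neq 0$, the containment $\lambda I - A\in\calP(\calA+\calI)$ for $\lambda\neq0$, and then the colorability criterion of \cite[Theorem 10]{jia2020unifying} applied to each pattern matrix for $(2)\Leftrightarrow(3)$ --- is exactly the standard route, and your sufficiency half $(2)\Rightarrow(1)$ and the step $(2)\Leftrightarrow(3)$ are sound. (One small caveat: PBH quantifies over $\lambda\in\mathbb{C}$, while $\calP(\cdot)$ is defined in this paper as a subset of $\mathbb{R}^{p\times q}$; to let real-pattern full row rank control the complex matrices $\lambda I - A$ with $\lambda\notin\mathbb{R}$ you need field-independence of the full-row-rank property, which in this framework is exactly what the combinatorial characterization in $(3)$ provides.)

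The genuine gap is in your converse for $\lambda\neq 0$. You correctly isolate the obstacle --- an arbitrary rank-deficient witness $(\tilde A,B)\in\calP(\calA+\calI)\times\calP(\calB)$ need not have $\tilde A_{ii}=\lambda$ on the index set $\{i:\calA_{ii}=0\}$ --- but the proposed repair does not work as stated. Lower semicontinuity of rank says small perturbations can only \emph{increase} rank, which is the opposite of what you need: it gives no way to move within the rank-deficient locus toward your affine constraint set, and a nonempty real-algebraic set need not meet a prescribed affine subspace, so the ``generic position'' appeal proves nothing. The elementary fix is a column-scaling argument: if $z^\top\begin{bmatrix}\tilde A & B\end{bmatrix}=0$ with $z\neq 0$, then for every nonsingular diagonal $D$ one also has $z^\top\begin{bmatrix}\tilde A D & B\end{bmatrix}=0$, and $\tilde A D\in\calP(\calA+\calI)$ because scaling columns by nonzero constants preserves the zero/nonzero pattern. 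Choosing $d_i=\lambda/\tilde A_{ii}$ for $i$ with $\calA_{ii}=0$ (these entries are nonzero since $(\calA+\calI)_{ii}=\ast$ there), $d_i$ any nonzero number different from $\lambda/\tilde A_{ii}$ at positions with $\calA_{ii}=\ast$ and $\tilde A_{ii}\neq 0$, and $d_i=1$ elsewhere, makes $A:=\lambda I-\tilde A D$ a genuine element of $\calP(\calA)$ with $\begin{bmatrix}\lambda I-A & B\end{bmatrix}$ rank deficient, which completes the necessity. With this replacement your proof closes; without it, the hard direction is only asserted.
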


\section{CONDITIONS FOR CONTROLLABILITY OF STRUCTURED NETWORKS}\label{sec:3}

In this section, we will analyze conditions under which the structured network with MIMO node systems is controllable. Since single integrators and SISO node systems are special forms of MIMO nodes, we will try to generalize the results of \cite{jia2020unifying} and \cite{jia2021scalable} to strong structural controllability of structured networks with general heterogeneous MIMO node systems.

Before presenting our results, without loss of generality, we first make the following simplifying assumption that will be in place throughout the article.

\begin{assumption}\label{as:1}
For all $k \in \{1,2 \ldots,N\}$,  each input (output) of the node system $(\calA^{(k)},\calB^{(k)},\calC^{(k)})$ can be injected to (affected by) exactly one state.
\end{assumption}

By Assumption \ref{as:1}, for all $k \in \{1,2 \ldots,N\}$, every entries of $\mathcal{B}^{(k)}$ and $(\mathcal{C}^{(k)})^\top$are equal to $0$ except for exactly one equal to $\ast$ for each column.
This special structure will allow us to apply Lemma \ref{l:PMP}.
Note that Assumption \ref{as:1} is not really restriction of Problem \ref{promblem1}.
We are ready to provide the first result in this paper as follows.
\begin{theorem}\label{th:C4SN}
The following statements are equivalent:
\begin{enumerate}[(i)]
\item the structured network $(\mathcal{A},\mathcal{B},\mathcal{C},\mathcal{W},\mathcal{H})$ is controllable \label{st:1};
\item the structured system $(\calA + \calB \calW \calC, \calB \calH)$ is controllable \label{st:2};
\item both pattern matrices $\begin{bmatrix}
   \mathcal{A} + \mathcal{B}\mathcal{W}\mathcal{C} & \mathcal{B}\mathcal{H}
 \end{bmatrix}$ and $\begin{bmatrix}
   \calA+\calI + \mathcal{B}\mathcal{W}\mathcal{C} & \mathcal{B}\mathcal{H}
 \end{bmatrix}$ have full row rank\label{st:3};
 \item both graphs $G(\begin{bmatrix}
     \mathcal{A} + \mathcal{B}\mathcal{W}\mathcal{C} & \mathcal{B}\mathcal{H}
   \end{bmatrix})$ and $G(\begin{bmatrix}
       \calA+\calI + \mathcal{B}\mathcal{W}\mathcal{C} & \mathcal{B}\mathcal{H}
     \end{bmatrix})$ are colorable \label{st:4};
\end{enumerate}
\end{theorem}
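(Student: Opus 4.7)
The plan is to reduce the four-way equivalence to two independent pieces. The block (ii)$\Leftrightarrow$(iii)$\Leftrightarrow$(iv) is an immediate application of Proposition~\ref{p:T4SS} to the LTI structured system $(\calA+\calB\calW\calC,\calB\calH)$, since this is exactly the setting in which that proposition is stated. The substance of the theorem is therefore the equivalence (i)$\Leftrightarrow$(ii).

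Both directions of (i)$\Leftrightarrow$(ii) rest on the two pattern-class identities
\begin{equation*}
\calP(\calA+\calB\calW\calC)=\calP(\calA)+\calP(\calB)\calP(\calW)\calP(\calC),\qquad \calP(\calB\calH)=\calP(\calB)\calP(\calH).
\end{equation*}
Proposition~\ref{p:1} yields $\calP(\calA+\calB\calW\calC)=\calP(\calA)+\calP(\calB\calW\calC)$. Assumption~\ref{as:1} forces each column of $\calB$ to contain exactly one $\ast$ and each row of $\calC$ to contain exactly one $\ast$, matching conditions (2) and (1) of Lemma~\ref{l:PMP}; two applications of that lemma collapse $\calP(\calB\calW\calC)$ to $\calP(\calB)\calP(\calW)\calP(\calC)$ and likewise give $\calP(\calB\calH)=\calP(\calB)\calP(\calH)$. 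Granted these identities, the direction (ii)$\Rightarrow$(i) is immediate: for any network realization $(A,B,C,W,H)$, the pair $(A+BWC,BH)$ lies in $\calP(\calA+\calB\calW\calC)\times\calP(\calB\calH)$, so (ii) makes it controllable.

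The nontrivial direction is (i)$\Rightarrow$(ii). Given an arbitrary $(A',B')\in\calP(\calA+\calB\calW\calC)\times\calP(\calB\calH)$, I would use the identities above to produce decompositions $A'=A_2+B_2 W_2 C_2$ and $B'=B_1 H_1$ with members drawn from the appropriate pattern classes. The main obstacle is that these decompositions supply two a priori different elements $B_1,B_2\in\calP(\calB)$, whereas a single network realization demands a common $B$. To reconcile them I exploit the column structure of $\calB$ from Assumption~\ref{as:1}: every $B\in\calP(\calB)$ admits a factorization $B=\tilde B\, D_B$, where $\tilde B$ is the common $0/1$ sparsity pattern (one $1$ per column) and $D_B$ is an $r\times r$ invertible diagonal of nonzero scalars. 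Hence $B_2=B_1\bigl(D_{B_1}^{-1}D_{B_2}\bigr)$; setting $E:=D_{B_1}^{-1}D_{B_2}$ and $W:=E\, W_2$, one obtains $B_2 W_2 C_2=B_1 W C_2$, and $W\in\calP(\calW)$ because left-multiplication by a nonzero diagonal matrix only rescales rows and therefore preserves the $\{0,\ast,?\}$ pattern. Then $(A,B,C,W,H):=(A_2,B_1,C_2,E\, W_2,H_1)$ is a bona fide network realization with $A+BWC=A'$ and $BH=B'$, and (i) delivers the desired controllability of $(A',B')$, completing the equivalence.
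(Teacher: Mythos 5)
Your proof is correct and follows essentially the same route as the paper: Proposition~\ref{p:T4SS} disposes of (ii)$\Leftrightarrow$(iii)$\Leftrightarrow$(iv), and the equivalence (i)$\Leftrightarrow$(ii) rests on the identities $\calP(\calA+\calB\calW\calC)=\calP(\calA)+\calP(\calB)\calP(\calW)\calP(\calC)$ and $\calP(\calB\calH)=\calP(\calB)\calP(\calH)$ obtained from Proposition~\ref{p:1} and Lemma~\ref{l:PMP} under Assumption~\ref{as:1}. The one place you go beyond the paper is the diagonal-rescaling argument reconciling the two a priori different factors $B_1,B_2\in\calP(\calB)$ into a single network realization; the paper's proof stops at the set identities and leaves this coupling step implicit, even though it is genuinely needed for the direction (i)$\Rightarrow$(ii), so your version is the more complete one.
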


\begin{proof}
The equivalence of statements \eqref{st:2}, \eqref{st:3} and \eqref{st:4} are followed immediately from Proposition \ref{p:T4SS}.
Therefore, we only need to show that the statement \eqref{st:1} is equivalent to the statement \eqref{st:2} as follows.
 Firstly, by Proposition \ref{p:1}, we have that
  $$\calP(\calA + \calB\calW\calC) = \calP(\calA) + \calP(\calB \calW \calC).$$
  In addition, by Assumption \ref{as:1} and Lemma \ref{l:PMP}, we have that
 $$\calP(\calA + \calB\calW\calC) = \calP(\calA) + \calP(\calB) \calP(\calW) \calP(\calC)$$
 $$\calP(\calB\calH) = \calP(\calB) \calP(\calH),$$
 which means that statements \eqref{st:1} and \eqref{st:2} are equivalent, and this completes the proof.
\end{proof}

Next, we present the following example to illustrate Theorem \ref{th:C4SN}.
\begin{example}\label{ex:1}
Consider a structured network consisting of 3 two-input two-output node systems given by
\begin{equation*}
\begin{scriptsize}
\begin{aligned}
  \mathcal{A}^{(1)} & = \begin{bmatrix}
                     \ast & 0 & 0 & 0 \\
                     0 & ? & 0 & 0 \\
                     ? & \ast & \ast & 0 \\
                     \ast & 0 & 0 & ?
                      \end{bmatrix},
  \mathcal{A}^{(2)} = \begin{bmatrix}
                        ? & 0 & \ast & 0 \\
                        0 & \ast & 0 & \ast \\
                        0 & \ast & \ast & 0 \\
                        \ast & 0 & 0 & ?
                      \end{bmatrix},\\
   \mathcal{A}^{(3)} & = \begin{bmatrix}
                        \ast & 0 & 0 & 0 \\
                        0 & 0 & \ast & 0 \\
                        0 & 0 & ? & \ast \\
                        \ast & 0 & \ast & \ast
                      \end{bmatrix},
  \mathcal{B}^{(1)} = \mathcal{B}^{(2)} = \mathcal{B}^{(3)} = \begin{bmatrix}
                        \ast & 0 \\
                        0 & \ast \\
                        0 & 0\\
                        0 & 0
                      \end{bmatrix},\\
  \mathcal{C}^{(1)} &= \mathcal{C}^{(2)} = \mathcal{C}^{(3)} = \begin{bmatrix}
                        0 & 0 & \ast & 0 \\
                        0 & 0 & 0 & \ast
                      \end{bmatrix},
\end{aligned}
\end{scriptsize}
\end{equation*}
and an external input vector $u \in \mathbb{R}^2$.
The structured network is formed by interconnecting these node systems through the structured interconnection law defined by the following pattern matrices
\begin{equation*}
  \mathcal{W} := \begin{scriptsize}
 \left[\begin{array}{cc:cc:cc}
                          0 & 0 & 0 & 0 & 0 & 0 \\
                          0 & 0 & 0 & 0 & 0 & 0 \\
                          \hdashline
                          \ast & 0 & 0 & 0 & 0 & 0 \\
                          ? & \ast & 0 & 0 & 0 & 0 \\
                          \hdashline
                          0 & 0 & \ast & 0 & 0 & 0 \\
                          0 & 0 & 0 & ? & 0 & 0
                        \end{array}\right]
  \end{scriptsize} ~~\text{and}~~
  \mathcal{H} := \begin{scriptsize}
  \left[\begin{array}{c:c}
                           \ast & 0  \\
                           0 & \ast  \\
                           \hdashline
                           0 & 0   \\
                           0 & 0  \\
                           \hdashline
                           0 & 0  \\
                           0 & 0
                         \end{array}\right]
  \end{scriptsize}.
\end{equation*}
By applying the color change rule in Section \ref{sec:2}, it turns out that both the graph $G(\begin{bmatrix}
\calA+\calB\calW\calC & \calB\calH
\end{bmatrix})$ and $G(\begin{bmatrix}
\calA + \calI+\calB\calW\calC & \calB\calH
\end{bmatrix})$ depicted in Figure 1 is colorable, and hence the network $(\calA,\calB,\calC,\calW,\calH)$ is controllable.
\vspace{-5pt}
\begin{figure}[!h]
\centering 
\includegraphics[scale=0.5]{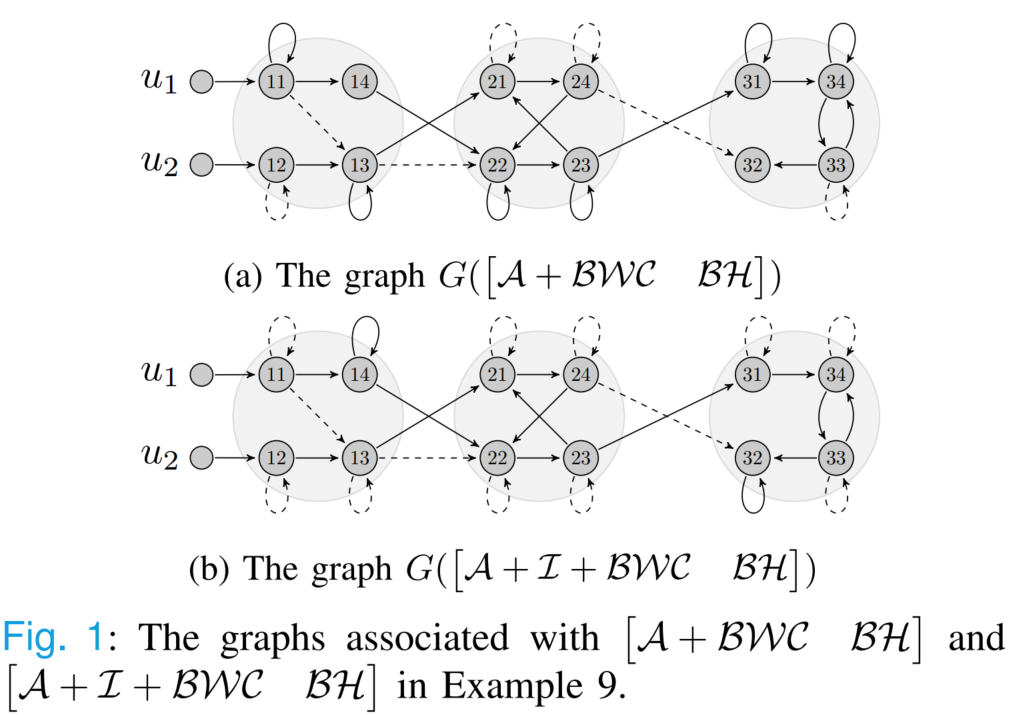}

\label{1}
\end{figure}
\end{example}
\vspace{-5pt}

Although we generalize the results in \cite{jia2020unifying} and \cite{jia2021scalable} to structured networks with MIMO node systems, and
establish algebraic and graph-theoretic conditions for controllability of structured networks.
  However, for large-scale structured networks, both algebraic and graph theoretic criteria in Theorem \ref{th:C4SN} are computationally prohibitive since the dimension of the structured network is $n \times (n+m)$, where $n=\sum\limits_{k=1}^{N} n_k$ may be extremely large.

 A similar problem appears in the context of structured networks with SISO node systems.
 In \cite{jia2021scalable}, a scalable method was established to verify the full row rank property of the pattern matrices in Theorem \ref{th:C4SN}.
 Unfortunately, this method is not suitable for the case of structured networks with general MIMO node systems.
 Even in the circumstances of $r_k=p_k=2$, there are at least 21 classifications of node systems, and for more general circumstances, the classification is quite complicated so the scalable method is not feasible.

In the sense of the above-mentioned computational limitation, Theorem \ref{th:C4SN} is not good enough.
 Notice that a structured network is composed of two aspects, namely a family of structured node systems and structured interconnection laws.
Reasonable access to establish  computationally efficient conditions is to analyze controllability of structured networks from these two perspectives.
 Therefore, it is meaningful to provide better methods to relax the limitation of high computational cost.

  \section{NECESSARY CONDITIONS FOR CONTROLLABILITY OF LARGE-SCALE STRUCTURED NETWORKS}
  \label{sec:necessary}

 In the previous section, we have stated algebraic and graph-theoretic conditions for controllability of structured networks.
 However, these criteria are computationally prohibitive and do not take into account the characteristics of node systems and network topology, which plays an important role in structured networks.
 Therefore, in this section, to relax the computational limitation, we will provide conditions for controllability of large-scale structured networks from two perspectives, i.e. node systems and network topology.

 First, we present the following corollary from the perspective of node systems. This corollary only needs to verify node systems with dimension $n_k \times (n_k + r_k)$, which reduces computational cost compared with the whole structured network with dimension $n \times (n + m)$.
 \begin{corollary}
   If the structured network $(\mathcal{A},\mathcal{B},\mathcal{C},\mathcal{W},\mathcal{H})$ is controllable, then $(\mathcal{A}^{(k)}, \mathcal{B}^{(k)})$ is controllable for all $k=1,\ldots,N$.
 \end{corollary}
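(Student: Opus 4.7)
The plan is to prove the contrapositive: if some node-system pair $(\mathcal{A}^{(k)},\mathcal{B}^{(k)})$ fails to be controllable, then the whole structured network $(\mathcal{A},\mathcal{B},\mathcal{C},\mathcal{W},\mathcal{H})$ is not controllable. The idea is to exploit the block-diagonal structure of $A$ and $B$ (guaranteed by \eqref{eq:patternABC}) so that a Popov–Belevitch–Hautus (PBH) obstruction localized at node $k$ lifts to a PBH obstruction of the full closed-loop pair $(A+BWC,\,BH)$, independently of any choice of $W$, $C$, $H$, or of the other node blocks.

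More concretely, suppose $(\mathcal{A}^{(k)},\mathcal{B}^{(k)})$ is not controllable. First I would pick particular realizations $A^{(k)}\in\mathcal{P}(\mathcal{A}^{(k)})$ and $B^{(k)}\in\mathcal{P}(\mathcal{B}^{(k)})$ for which $(A^{(k)},B^{(k)})$ fails the controllability test, and apply PBH to extract a nonzero (possibly complex) vector $\xi^{(k)}$ and scalar $\lambda$ with
\begin{equation*}
\xi^{(k)\top}A^{(k)}=\lambda\,\xi^{(k)\top},\qquad \xi^{(k)\top}B^{(k)}=0.
\end{equation*}
Then I would complete $A$ and $B$ into elements of $\mathcal{P}(\mathcal{A})$ and $\mathcal{P}(\mathcal{B})$ by choosing arbitrary $A^{(j)}\in\mathcal{P}(\mathcal{A}^{(j)})$, $B^{(j)}\in\mathcal{P}(\mathcal{B}^{(j)})$ for the remaining blocks $j\neq k$, and fix arbitrary $C\in\mathcal{P}(\mathcal{C})$, $W\in\mathcal{P}(\mathcal{W})$, $H\in\mathcal{P}(\mathcal{H})$.

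The key step is then to define the zero-padded vector
\begin{equation*}
\xi:=\operatorname{col}\bigl(0,\ldots,0,\xi^{(k)},0,\ldots,0\bigr)\in\mathbb{C}^{n},
\end{equation*}
and observe that because $A=\operatorname{diag}(A^{(1)},\ldots,A^{(N)})$ and $B=\operatorname{diag}(B^{(1)},\ldots,B^{(N)})$ are block diagonal,
\begin{equation*}
\xi^{\top}A=\lambda\,\xi^{\top},\qquad \xi^{\top}B=0.
\end{equation*}
Consequently $\xi^{\top}(A+BWC)=\lambda\,\xi^{\top}+(\xi^{\top}B)WC=\lambda\,\xi^{\top}$ and $\xi^{\top}BH=0$, so by PBH the pair $(A+BWC,\,BH)$ is not controllable. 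Via Theorem~\ref{th:C4SN} (equivalence of statements (i) and (ii)), this means $(\mathcal{A},\mathcal{B},\mathcal{C},\mathcal{W},\mathcal{H})$ is not controllable, contradicting the hypothesis.

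There is essentially no technical obstacle here; the only point that needs a careful sentence is that the completions in the non-$k$ blocks can always be chosen inside the prescribed pattern classes (which is immediate since $\mathcal{P}(\mathcal{A}^{(j)})$ and $\mathcal{P}(\mathcal{B}^{(j)})$ are nonempty), and that the block-diagonal form of $A$ and $B$—a consequence of the definitions in \eqref{eq:patternABC}—is what makes the localized PBH certificate propagate to a global one regardless of $W$, $C$, and $H$.
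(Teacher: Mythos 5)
Your argument is correct and complete. The paper itself does not supply a proof of this corollary---it only remarks that one ``can be easily obtained'' by adapting Corollary~11 of the cited SISO paper---so your PBH-based contrapositive actually fills a gap the authors left open. The essential mechanism is exactly right: block-diagonality of $A=\operatorname{diag}(A^{(1)},\ldots,A^{(N)})$ and $B=\operatorname{diag}(B^{(1)},\ldots,B^{(N)})$ lets the zero-padded left eigenvector $\xi$ satisfy $\xi^{\top}A=\lambda\xi^{\top}$ and $\xi^{\top}B=0$, whence $\xi^{\top}(A+BWC)=\lambda\xi^{\top}$ and $\xi^{\top}BH=0$ for \emph{every} admissible $W,C,H$, killing PBH for the closed-loop pair. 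Two small remarks. First, you do not need Theorem~\ref{th:C4SN} at all: you have exhibited one concrete realization $A\in\calP(\calA)$, $B\in\calP(\calB)$, $C\in\calP(\calC)$, $W\in\calP(\calW)$, $H\in\calP(\calH)$ for which \eqref{eq:comnetwork} is uncontrollable, which directly negates the definition of strong structural controllability of the network (statement \eqref{st:1}); invoking the equivalence with \eqref{st:2} is a harmless detour. Second, the route more in keeping with the paper's own machinery would argue at the level of pattern matrices: if $\begin{bmatrix}\calA^{(k)} & \calB^{(k)}\end{bmatrix}$ or $\begin{bmatrix}\calA^{(k)}+\calI & \calB^{(k)}\end{bmatrix}$ fails to have full row rank, the corresponding row block of $\begin{bmatrix}\calA+\calB\calW\calC & \calB\calH\end{bmatrix}$ (respectively with $\calI$ added) inherits the deficiency. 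Your realization-level PBH argument is more elementary and avoids having to reason about how $\calB\calW\calC$ interacts with the $k$th row block, so it is arguably the cleaner of the two.
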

 Note that in \cite[Corollary 11]{jia2021scalable}, it has been shown that structured networks with SISO nodes are controllable only if each node system is controllable.
 The proof of the above corollary can be easily obtained by generating that from \cite[Corollary 11]{jia2021scalable}, and hence we omit it.

Next, we move on to exploring the relations between the controllability of structured networks and structured interconnection laws.
 Our first observation is that there is no direct relationship between the controllability of structured networks and the full-rank property of the pattern matrix $\begin{bmatrix} \mathcal{W} & \mathcal{H} \end{bmatrix}$  formed by structural interconnection laws.
 That is, the full-rank property of pattern matrix $\begin{bmatrix} \mathcal{W} & \mathcal{H} \end{bmatrix}$ is neither sufficient nor necessary for controllability of $(\mathcal{A},\mathcal{B},\mathcal{C},\mathcal{W},\mathcal{H})$.
On the one hand, the deficiency of sufficiency is evident since controllability of $(\mathcal{A},\mathcal{B},\mathcal{C},\mathcal{W},\mathcal{H})$ relies on not only the interconnection but also the nodal dynamics.
On the other hand, we provide the following example to illustrate that necessity does not hold either.

 \begin{example}\label{ex:3}
   Consider the structured networks $(\calA,\calB,\calC,\calW,\calH)$ in Example \ref{ex:1} which is controllable.
   We will now show that the interconnection pattern matrix $\begin{bmatrix}
  \calW & \calH
   \end{bmatrix}$ is not full row rank.
   To this end, let us consider the graph in Figure 2.
   By adopting the color change rule in Section \ref{sec:2}, we obtain that the node $6$ can not be colored, i.e., the graph is not colorable.
   This implies that the full rank property of $\begin{bmatrix} \mathcal{W} & \mathcal{H} \end{bmatrix}$ is not necessary for controllability of $(\mathcal{A},\mathcal{B},\mathcal{C},\mathcal{W},\mathcal{H})$.
\vspace{-8pt}
\begin{figure}[!h]
\centering 
\includegraphics[scale=0.8]{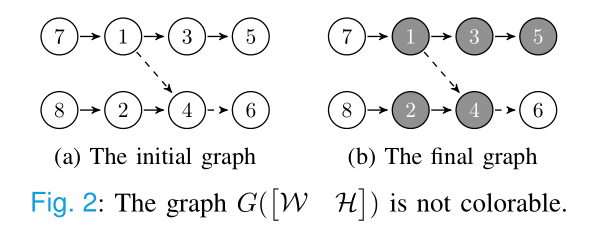}

\label{2}
\end{figure}
 \end{example}
\vspace{-8pt}

 \begin{remark}
 One of the reasons for the counter-example in Example \ref{ex:3} is that due to the inner dynamics of node $3$, the edge between vertex $24$ and vertex $32$ in Figure 1 is unnecessary to guarantee the colorability of the graph $G(\begin{bmatrix}
\calA + \calB \calW \calC & \calB \calH
 \end{bmatrix})$.
 This is unlike the case of structured networks with SISO node systems in which structured interconnection laws can accurately describe the underlying network topology.
  \end{remark}

 Fortunately, since structured interconnection laws are usually sparse, it will be particularly significant to establish conditions for controllability of structured networks with respect to the structured interconnection laws.
To this end, we introduce the following definition to extract the sparse underlying network topology.

 \begin{definition}\label{de:extract}
   Consider the interconnection pattern matrices $\mathcal{W} \in \{0,\ast,?\}^{r \times p}$ and $\mathcal{H} \in \{0,\ast,?\}^{r \times m}$.
    We will define the following underlying interconnection pattern matrices  $\widetilde{\mathcal{W}} \in \{0,\ast,?\}^{N \times N}$ and $\widetilde{\mathcal{H}} \in \{0,\ast,?\}^{N \times m}$
 \begin{equation} \label{eq:Wtilde}
     \widetilde{\mathcal{W}}_{ij} = \begin{cases}
        0 & \text{if} \quad \mathcal{W}^{(ij)} = \textbf{
        0}\\
        ? & \text{if} \quad \mathcal{W}^{(ij)} ~\text{does not contain any}~ \ast ~\text{entries and}\\
          & \text{contains at least }~?~\\
        \ast & \text{otherwise}.
     \end{cases}
 \end{equation}
 and
 \begin{equation}\label{eq:Htilde}
     \widetilde{\mathcal{H}}_{ij} = \begin{cases}
        0 & \text{if} \quad \mathcal{H}^{(ij)} = \textbf{
        0}\\
        ? & \text{if} \quad \mathcal{H}^{(ij)} ~\text{does not contain any}~ \ast ~\text{entries and}\\
          & \text{contains at least }~?~\\
        \ast & \text{otherwise}.
     \end{cases}
 \end{equation}
 \end{definition}

 Before presenting our results, we first introduce the following notions for graphs associated with pattern matrices.
 Consider a pattern matrix $\calM \in \{0,\ast,?\}^{p \times q}$ with $p < q$ and its associated graph $G(\calM)$ defined in Section \ref{sec:2}.
%
 To do so, we first need to introduce a new color change rule called {\it weakly color change rule}:
 \begin{enumerate}
 \item color all vertices in $\{p+1,\ldots,q\}$ black while the rest white;
 \item if a vertex $i$ is black and $j$ is a white out-neighbor of $i$ such that  $(i,j) \in E_{\ast}$, then change color of $j$ to black;
 \item repeat the step $2$ until no more color changes are possible.
 \end{enumerate}

 The derived set $\mathcal{D}_{w}(\calM)$ of $G(\calM)$ is defined as the set of all black nodes obtained by applying the above procedure to $G(\calM)$.
In the special case that $\mathcal{D}_{w}(\calM) = \{1,2,\ldots,q\}$, we call the graph $G(\calM)$ {\it weakly colorable}.

Now, we are ready to present the following necessary condition for controllability of large-scale structured network.

\begin{theorem} \label{th:ecto}
Consider the structured network $(\mathcal{A},\mathcal{B},\mathcal{C},\mathcal{W},\mathcal{H})$ with interconnection pattern matrices $\widetilde{\mathcal{W}}$ and $\widetilde{\mathcal{H}}$ defined as \eqref{eq:Wtilde} and \eqref{eq:Htilde}, respectively.
 Then,  the structured network $(\mathcal{A},\mathcal{B},\mathcal{C},\mathcal{W},\mathcal{H})$ is controllable only if the graph $G(\begin{bmatrix}\widetilde{\mathcal{W}}&\widetilde{\mathcal{H}}\end{bmatrix})$ is weakly colorable.
\end{theorem}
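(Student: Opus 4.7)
The plan is to prove the contrapositive. Assuming $G(\begin{bmatrix}\widetilde{\mathcal{W}} & \widetilde{\mathcal{H}}\end{bmatrix})$ is not weakly colorable, I will exhibit a realization in the pattern class that renders the resulting system uncontrollable, contradicting strong structural controllability of $(\mathcal{A},\mathcal{B},\mathcal{C},\mathcal{W},\mathcal{H})$.

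Let $S \subseteq \{1,\ldots,N\}$ be the set of node vertices that remain white when the weak color change procedure terminates; by hypothesis $S$ is non-empty. The stopping condition asserts that no black vertex has an $\ast$-edge into any vertex of $S$. Read off on $\begin{bmatrix}\widetilde{\mathcal{W}} & \widetilde{\mathcal{H}}\end{bmatrix}$, this means that for every $k \in S$ we have $\widetilde{\mathcal{H}}_{ki} \neq \ast$ for all $i \in \{1,\ldots,m\}$ and $\widetilde{\mathcal{W}}_{kj} \neq \ast$ for all $j \in \{1,\ldots,N\} \setminus S$. By Definition \ref{de:extract}, each such block $\mathcal{H}^{(ki)}$ and $\mathcal{W}^{(kj)}$ then contains only entries in $\{0,?\}$, so one may set every $?$ entry in these blocks to zero, obtaining $H^{(ki)} = 0$ for $k \in S, i \in \{1,\ldots,m\}$ and $W^{(kj)} = 0$ for $k \in S, j \notin S$; the remaining blocks of $W, H$ together with $A, B, C$ are chosen arbitrarily in their pattern classes.

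Partition the state as $x = (x_S, x_{\bar S})$ along $V_S := \bigcup_{k \in S} V_k$ and its complement, with $V_k$ denoting the $n_k$-dimensional state block of node $k$. Since $A$ is block diagonal, $(A + BWC)^{(k,j)} = B^{(k)} W^{(kj)} C^{(j)}$ for $k \neq j$, which vanishes for $k \in S, j \notin S$ by our choice of $W^{(kj)}$; likewise $(BH)^{(k,i)} = B^{(k)} H^{(ki)} = 0$ for all $k \in S$. The closed-loop matrices therefore admit the block-triangular decomposition
\[
A + BWC = \begin{bmatrix} M_{SS} & 0 \\ M_{\bar S S} & M_{\bar S \bar S} \end{bmatrix}, \qquad BH = \begin{bmatrix} 0 \\ N_{\bar S} \end{bmatrix},
\]
so the $x_S$-subsystem is autonomous and the reachable subspace is contained in $\{x : x_S = 0\}$, a proper subspace since $V_S \neq \emptyset$. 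Hence this realization is uncontrollable, contradicting strong structural controllability of $(\mathcal{A},\mathcal{B},\mathcal{C},\mathcal{W},\mathcal{H})$, and we conclude that $S = \emptyset$.

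The main obstacle is conceptual rather than technical: one must recognize that Definition \ref{de:extract} is tailored precisely so that $\widetilde{\mathcal{W}}_{kj} \neq \ast$ (respectively $\widetilde{\mathcal{H}}_{ki} \neq \ast$) certifies that the interconnection block $\mathcal{W}^{(kj)}$ (respectively $\mathcal{H}^{(ki)}$) admits a zero realization in its pattern class. Once this is observed, the stopping condition of the weak color change rule supplies exactly the inter-node block partition along which the off-diagonal of $A + BWC$ and the input block $BH$ can be simultaneously zeroed, and the rest is the standard Kalman-style argument.
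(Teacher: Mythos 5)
Your proof is correct, and it takes a genuinely different route from the paper's. Both arguments start from the same place: the white set $S$ at termination of the weak color change rule certifies, via Definition \ref{de:extract}, that every block $\mathcal{W}^{(kj)}$ with $k\in S$, $j\notin S$ and every block $\mathcal{H}^{(ki)}$ with $k\in S$ is $\ast$-free and hence admits the zero matrix in its pattern class. From there the paper stays at the level of the algebraic characterization in Theorem \ref{th:C4SN}\eqref{st:3}: it observes that the white-node row blocks of $\begin{bmatrix}\mathcal{A}+\mathcal{B}\mathcal{W}\mathcal{C} & \mathcal{B}\mathcal{H}\end{bmatrix}$ and $\begin{bmatrix}\mathcal{A}+\mathcal{I}+\mathcal{B}\mathcal{W}\mathcal{C} & \mathcal{B}\mathcal{H}\end{bmatrix}$ reduce to the square diagonal blocks $\mathcal{A}^{(22)}+\mathcal{B}^{(22)}\mathcal{W}^{(22)}\mathcal{C}^{(22)}$ and its $+\,\mathcal{I}$ counterpart, and then invokes Lemma \ref{le:bo} to conclude that at most one of these two square pattern matrices can have full row rank, so at least one of the two big rank conditions fails. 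You instead exhibit a single adversarial realization (zeroing the $?$ entries in the relevant blocks) and run a direct reachability argument: the block-triangular form makes the $x_S$-subsystem autonomous, so the reachable subspace lies in $\{x_S=0\}$. Your route is more elementary and self-contained --- it bypasses Proposition \ref{p:T4SS}/Theorem \ref{th:C4SN}\eqref{st:3}, Lemma \ref{le:bo}, and the permutation-matrix argument inside its proof --- while the paper's route keeps everything inside the pattern-matrix rank calculus it has already set up (the eigenvector step $z^\top(M+\lambda I)=0$ in the proof of Lemma \ref{le:bo} is essentially the PBH counterpart of your invariant-subspace observation). The only small point worth making explicit in your write-up is that $0\in\mathcal{P}(\mathcal{W}^{(kj)})$ precisely because those blocks contain no $\ast$ entries, which you do state; everything else is sound.
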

Note that Theorem \ref{th:ecto} contributes to reducing the dimension from $n \times (n + m)$ to $N \times (N + m)$, which greatly reduces computational cost.
To conclude this section, we will provide an illustrative example of the above theorem, and the proof of this theorem can be found in the appendix.
\begin{example}\label{ex:4}
   Consider the structured networks $(\calA,\calB,\calC,\calW,\calH)$ in Example \ref{ex:1} which is controllable, and  we will show that the necessary condition in the Theorem \ref{th:ecto}  holds.
   By the Definition \ref{de:extract}, we compute the following underlying pattern matrices
   \begin{equation*}
   \widetilde{\mathcal{W}} = \begin{bmatrix}
   0 & 0 & 0\\
   \ast & 0 & 0\\
   0 & \ast & 0
   \end{bmatrix} \quad \text{and} \quad\widetilde{\mathcal{H}} = \begin{bmatrix}
   \ast& \ast\\
   0 & 0\\
   0 & 0
   \end{bmatrix}.
   \end{equation*}
   It turns out that the graph $G(\begin{bmatrix}
   \widetilde{\mathcal{W}} & \widetilde{\mathcal{H}}
   \end{bmatrix})$ is weakly colorable as shown in Figure 3.
  \vspace{-7pt}

  \begin{figure}[!h]
\centering 
\includegraphics[scale=0.8]{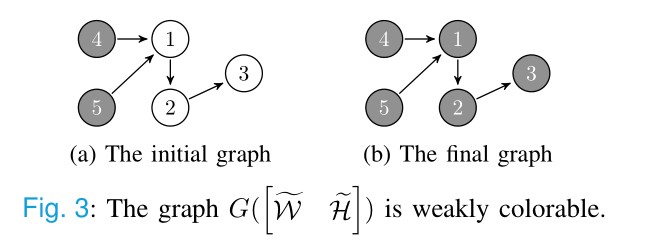}

\label{3}
\end{figure}
 \end{example}
\vspace{-15pt}


\section{Conclusion and Discussion}
\label{sec:condition}
This article has studied strong structural controllability of structured networks with general MIMO node systems. Necessary and sufficient algebraic and graph-theoretic conditions have been established which are generalizations of the results of strong structural controllability for structured systems or structured networks with SISO node systems.
The state space dimension of structured networks can be extremely large such that these criteria are computationally prohibitive.
To deal with this problem, some efficient conditions have been given from the perspective of node systems and underlying network topology extracted from structured interconnection laws. In order to verify the necessary condition from the underlying network topology, we introduce a new color change rule called weakly color change rule.

We conclude this section with some suggestions for future research.
Note that the results to deal with the problem of large-scale structured networks are all necessary conditions. Hence, finding sufficient conditions for strong structural controllability of structured networks is still an open problem.
Another opportunity for future research is to extend our results to a wider range of applications.
For example, based on the results of weak structural controllability of networks, numerous works have been reported from rather diverse perspectives on such topics as topology design \cite{mu2022structural}, minimal input selection problem \cite{Zhou2017minimal}, and so on.

\appendices
\section{The proof of Theorem \ref{th:ecto}}
For the proof of Theorem \ref{th:ecto}, the following auxiliary result will be instrumental:
\begin{lemma}\label{le:bo}
  Consider a square patter matrix $\calM \in \{0,\ast,?\}^{p \times p}$.
  Then, it holds that at most one of $\calM$ and $\calM + \calI$ has full row rank.
\end{lemma}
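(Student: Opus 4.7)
The plan is to prove this by contradiction: assume both $\calM$ and $\calM + \calI$ have full row rank, and derive a contradiction by producing a singular matrix in $\calP(\calM + \calI)$. Since $\calM$ is square, full row rank is equivalent to invertibility of every realization. By Proposition~\ref{p:1}, we have $\calP(\calM + \calI) = \calP(\calM) + \calP(\calI)$, and membership in $\calP(\calI)$ just means being a real diagonal matrix with all nonzero diagonal entries. So the assumption on $\calM + \calI$ is that $M + D$ is invertible for every $M \in \calP(\calM)$ and every diagonal $D$ with nonzero diagonal entries.

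First, I would pick any $M_0 \in \calP(\calM)$; by the assumption on $\calM$ it is nonsingular, so $\det(M_0) \neq 0$. Because $M_0$ is invertible it has no zero row, hence for each row we may multiply that entire row by $-1$ without leaving $\calP(\calM)$ (nonzero entries stay nonzero, zero entries stay zero). Performing this operation on the first row if necessary produces $M \in \calP(\calM)$ with $\det(M) < 0$; this is the key observation that lets the argument go through regardless of the parity of $p$.

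Next, consider the real polynomial $q(t) := \det(M + tI)$. It is monic of degree $p$ with $q(0) = \det(M) < 0$ and $q(t) \to +\infty$ as $t \to +\infty$. By the intermediate value theorem there exists $t^* > 0$ with $q(t^*) = 0$, so the matrix $M + t^* I$ is singular. Since $t^* \neq 0$, the matrix $t^* I$ belongs to $\calP(\calI)$, and therefore by Proposition~\ref{p:1}
\begin{equation*}
M + t^* I \in \calP(\calM) + \calP(\calI) = \calP(\calM + \calI).
\end{equation*}
This contradicts the assumption that $\calM + \calI$ has full row rank, and completes the proof.

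The only subtlety in this plan is the row-negation step used to force $\det(M)<0$, which is what makes a single IVT argument suffice in both the even and odd $p$ cases; the rest is a direct application of Proposition~\ref{p:1} together with an elementary polynomial sign argument. I do not anticipate a serious obstacle beyond verifying that row-negation remains inside $\calP(\calM)$, which follows from the fact that an invertible realization cannot have a zero row.
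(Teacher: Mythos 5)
Your proof is correct, and it takes a genuinely different route from the paper's. The paper splits the claim into two (logically equivalent) implications and proves each separately: for the direction ``$\calM$ full row rank $\Rightarrow$ $\calM+\calI$ not full row rank'' it picks an arbitrary nonsingular realization $M$ and invokes an eigenvalue $\lambda$ of $-M$ to make $M+\lambda I$ singular; for the converse direction it invokes a permutation-based triangularization of $\calM+\calI$ (from an external lemma) to exhibit a zero column in $\calM$. Your single argument replaces all of this with an elementary analytic one: normalize the sign of the determinant by negating a row (which stays inside $\calP(\calM)$ since negation preserves zero/nonzero patterns), then apply the intermediate value theorem to the monic polynomial $\det(M+tI)$ to find a real $t^\ast>0$ at which it vanishes. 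This buys you two things. First, it is self-contained, needing only Proposition~\ref{p:1} and no auxiliary triangularization lemma. Second, and more substantively, it sidesteps a real gap in the paper's eigenvalue argument: an arbitrary nonsingular real $M$ may have only non-real eigenvalues (e.g.\ a $2\times 2$ rotation), in which case $\lambda I$ is not a real diagonal matrix and $M+\lambda I$ does not lie in $\calP(\calM+\calI)$; your row-negation step is precisely what guarantees a \emph{real} nonzero root and makes the argument airtight. The only cosmetic quibble is that the ``no zero row'' observation is unnecessary --- row negation preserves membership in $\calP(\calM)$ regardless, and it flips the determinant's sign in any case --- but this does not affect correctness.
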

\begin{proof}
To prove this lemma, we will show the following statements hold: {\em (s1)} $\calM$ full row rank implies that $\calM + \mathcal{I}$ does not have full row rank, and {\em (s2)} $\calM + \mathcal{I}$ full row rank implies that $\calM$ does not have full row rank.
To prove the statement {\em (s1)}, suppose that $\calM$ has full row rank, and let $M$ be a matrix in $\calP(\calM)$.
It follows immediately that $M$ is nonsingular, and there exists a nonzero eigen-pair  $(\lambda, z)$ of $M$ such that $z^\top (M+\lambda I) = 0$.
Since $M+\lambda I \in \calP(\calM + \mathcal{I}) $ does not have full row rank which implies that $\calM + \mathcal{I}$ does not have full row rank, and we have proved {\em (s1)}.

On the other hand, suppose that $\calM + \mathcal{I}$ has full row rank.
By \cite[Lemma 19]{jia2020unifying}, there exists two permutation matrices $P_1$ and $P_2$ such that
\vspace{-5pt}
  \begin{equation*}\label{eq:permu}
    P_1 (\calM + \mathcal{I}) P_2 = \begin{scriptsize}
    \begin{bmatrix}
                          \ast & 0 & \ldots & 0  \\
                          \otimes & \ast & \ldots & 0 \\
                          \vdots & \ddots &  \ddots & \vdots  \\
                          \otimes & \ldots &  \otimes & \ast
                        \end{bmatrix}
    \end{scriptsize}
  \end{equation*}
  where the symbol $\otimes$ indicates an arbitrary entry in the set $\{0,\ast,?\}$.
  Due to the properties of permutation of matrices and the definition of pattern matrices addition operation in Table \ref{ta:results},
  in the last column of $P_1 (\calM + \mathcal{I}) P_2$, the $\ast$  must be an diagonal entry in $\calM + \mathcal{I}$.
  This implies that all the entries in the last column of $P_1 \calM P_2$ are equal to $0$.
  It then follows that $\calM$ does not have full row rank, and thus the proof is competed.
\end{proof}

Now, we can give the proof of Theorem \ref{th:ecto}.

{\it Proof of Theorem \ref{th:ecto}:}
To begin with, suppose that the structured network $(\mathcal{A},\mathcal{B},\mathcal{C},\mathcal{W},\mathcal{H})$ is controllable,
but the graph $G(\begin{bmatrix}\widetilde{\mathcal{W}}&\widetilde{\mathcal{H}}\end{bmatrix})$ is not weakly colorable.
Without loss of generality, we partition the pattern matrix $\begin{bmatrix}\widetilde{\mathcal{W}}&\widetilde{\mathcal{H}}\end{bmatrix}$ as
 \vspace{-5pt}
 \begin{equation*}
     \begin{bmatrix}
      \widetilde{\mathcal{W}}^{(11)} &  \widetilde{\mathcal{W}}^{(12)}&\widetilde{\mathcal{H}}^{(1)}\\
      \widetilde{\mathcal{W}}^{(21)} &  \widetilde{\mathcal{W}}^{(22)} &\widetilde{\mathcal{H}}^{(2)}
     \end{bmatrix}
 \end{equation*}

 where the first row and column block corresponds to black nodes in $\{1,\ldots,N\} $ and the second to remainders.

By the definition of weak color change rule, every column in $\widetilde{\mathcal{W}}^{(21)}$ and $\widetilde{\mathcal{H}}^{(2)}$ does not contain $\ast$ elements.
According to the Definition \ref{de:extract}, it follows that the corresponding
 the corresponding pattern matrix $\begin{bmatrix}\mathcal{W}&\mathcal{H}\end{bmatrix}$ can also be partitioned  as
 \begin{equation*}
     \begin{bmatrix}\calW & \mathcal{H}\end{bmatrix} = \begin{bmatrix}
      \mathcal{W}^{(11)} &  \mathcal{W}^{(12)}&\mathcal{H}^{(1)}\\
      \mathcal{W}^{(21)} &  \mathcal{W}^{(22)} &\mathcal{H}^{(2)}
     \end{bmatrix}
 \end{equation*}
 where neither $\mathcal{W}^{(21)}$ nor $\mathcal{H}^{(2)}$ contains $\ast$ element.
 Moreover, we partition the block pattern matrices $\calA$, $\calB$ and $\calC$ as
 \begin{equation*}
      \begin{scriptsize}\begin{bmatrix}
      \mathcal{A}^{(11)} &  \textbf{0} \\
      \textbf{0} &  \mathcal{A}^{(22)}
     \end{bmatrix},
      \begin{bmatrix}
      \mathcal{B}^{(11)} &  \textbf{0} \\
      \textbf{0} &  \mathcal{B}^{(22)}
     \end{bmatrix},
      \begin{bmatrix}
      \mathcal{C}^{(11)} &  \textbf{0} \\
      \textbf{0} &  \mathcal{C}^{(22)}
     \end{bmatrix}
     \end{scriptsize}
 \end{equation*}
where the first row and column block corresponds to black nodes in $\{1,\ldots,N\} $ and the second to remainders.
Therefore, recalling the definitions of multiplication and addition operations of pattern matrices, we have that
neither $\mathcal{B}^{(22)}\mathcal{W}^{(21)}\mathcal{C}^{(11)}$ nor $\calB^{(22)} \mathcal{H}^{(1)}$ contains $\ast$ element.
%
 Moreover, by the Lemma \ref{le:bo}, it follows that either $\mathcal{A}^{(22)} + \mathcal{B}^{(22)} \mathcal{W}^{(22)} \mathcal{C}^{(22)}$ or $\mathcal{A}^{(22)} + \mathcal{I} + \mathcal{B}^{(22)} \mathcal{W}^{(22)}\mathcal{C}^{(22)}$ does not have full row rank.
 This implies that either $\begin{bmatrix}\mathcal{A} + \mathcal{B}\mathcal{W}\mathcal{C} & \mathcal{B}\mathcal{H}\end{bmatrix}$ or $\begin{bmatrix}\calA +\mathcal{I} + \mathcal{B}\mathcal{W}\mathcal{C} & \mathcal{B}\mathcal{H}\end{bmatrix}$ is not full row rank, i.e., the structured network $(\calA, \calB, \calC, \calW, \calH)$ is not controllable, and thus we have reached a contradiction.
\EP

\bibliographystyle{unsrt}
\bibliography{Ref}

\end{document}